\documentclass[a4paper]{article}

\usepackage{amsmath}
\usepackage{amsfonts}
\usepackage{amssymb}
\usepackage{graphicx}
\usepackage[all]{xy}

\newtheorem{theorem}{Theorem}

\newtheorem{lemma}[theorem]{Lemma}

\newtheorem{proposition}[theorem]{Proposition}

\newenvironment{proof}[1][Proof]{\textbf{#1.} }{\ \rule{0.5em}{0.5em}}

\begin{document}

\title{Heterogeneous Beliefs Model of Stock Market Predictability}
\author{Jiho Park\thanks{Thanks to Dimitri Vayanos, Christian Julliard, Kathy Yuan, Igor Makarov, and Andrea Tamoni for their insights and guidance.}
\\London School of Economics}
\date{\today}
\maketitle

\begin{abstract}
This paper proposes a theory of stock market predictability patterns based on a model of heterogeneous beliefs. In a discrete finite time framework, some agents receive news about an asset's fundamental value through a noisy signal. The investors are heterogeneous in that they have different beliefs about the news signal. A momentum in the stock price arises from those agents who incorrectly underestimate the signal accuracy, dampening the initial price impact of the signal. A reversal in price occurs because the price reverts to the fundamental value in the long run. An extension of the model to multiple assets case predicts co-movement and lead-lag effect, in addition to cross-sectional momentum and reversal. The heterogeneous beliefs of investors about news demonstrate how the main predictability anomalies arise endogenously in a model of bounded rationality.
\end{abstract}

\section{Introduction}

Many empirical works over the past decades have documented several stock market patterns that challenge the efficient market foundation built on the CAPM. The two prominent ones among these anomalies are `momentum' and `reversal' effects. The momentum pattern predicts that stocks that have performed well in the past continue to do so, and those that did poorly continue to show poor performance in the future. Starting with Jegadeesh and Titman(1993), many subsequent papers have shown this empirical pattern of positive autocorrelation in prices in the medium term future. The reversal effect occurs when over a longer horizon, the reverse happens. There is a negative autocorrelation in stock prices in the long run as documented by De Bondt and Thaler(1985) and other works. These predictable patterns present unexplained arbitrage opportunities demonstrated by the significant excess returns of the portfolios based on the momentum and the reversal strategies.\\
\indent In explaining momentum and reversal, theoretical research has followed three main approaches. The first and the initial effort to explain the anomalies in terms of risk compensation have not been successful. Griffin, Ji, and Martin(2003) demonstrated this by showing that the excess returns of the momentum and reversal portfolios are not correlated with any of the risk factors. Another group of theoretical works has tried to explain the phenomena through behavioral finance. The main works include Daniel, Hirshleifer, and Subrahmanyam(1998) and Barberis, Shleifer, and Vishny(1998) which have attributed the anomalies to the underreaction and overreaction due to investor sentiments. Empirical studies show that behavioral explanation could partially explain momentum and reversal but is not always convincing as Jegadeesh and Titman(2001) argued. The last and the recent theoretical approach employs models with bounded rationality, which this paper focuses on.\\
\indent A fully rational model approach has not yet been successful in explaining both momentum and reversal in a unified way. On the one hand, Berk, Green, and Naik(1999) and Johnson(2002) present rational models resulting in momentum but not reversal. On the other hand, the rational model in Lewellen and Shanken(2002) explains reversal but not momentum, and Cespa and Vives(2012) model can explain the one or the other but not both. Despite these shortcomings, Vayanos and Woolley(2012) show a rational model of delegated portfolio management that produce both momentum and reversal, demonstrating that rational models can be potentially successful in explaining momentum and reversal in a unified manner.\\
\indent It is especially difficult to generate momentum using a fully rational equilibrium model as demonstrated by Makarov and Rytchkov(2012). Their paper shows that even with the presence of frictions such as information asymmetry, the price autocorrelation is equal to the expected value of the current supply times the current price change. Since supply and price change are inversely related, the price autocorrelation is in most cases negative. Even in a more general setting, the paper shows the robustness of this result.\\
\indent The intuition of this paper's model is as follows. In a four period model, a proportion of the investors receive a noisy signal about a risky asset's fundamental value after the first period. The heterogeneity exists in that the informed and the uninformed agents have different beliefs about the noisy supply, which matters for the uninformed in inferring the signal from price. The informed believe that the supply is noisier than the true distribution known by the uninformed. Suppose the informed had received a relatively good signal. Then, the price will increase but under-react to the news as the informed believe that the supply is noisier and expect that the uninformed will react less to the price change.\\
\indent In the next penultimate period, the price increases further as the informed investors no longer care about the noisy supply in submitting their demand. They required the distribution of the noisy supply in forecasting the uninformed demand and now uses only their signal to predict the final period value. The continued price change pushes the price above the final payoff, creating an overreaction. When the actual payoff is realized in the final period, the price converges, resulting in the price reversal. While the stochastic supply is enough to generate reversal in a symmetric information case, the asymmetric information exacerbates the reversal effect. The heterogeneous beliefs about news in the form of a noisy signal creates both momentum and reversal in a bounded rationality model.\\
\indent The application of information asymmetry to asset pricing has been developed by several papers including Wang(1993). While his information asymmetry model is used to explain several other asset pricing phenomena, this paper instead seeks to explain both momentum and reversal in a unified model. A recent application of information asymmetry model to liquidity is Vayanos and Wang(2012), from which this paper borrows the basic setup. Hong and Stein(1999) is the closest to this paper in its explanation of stock market predictability through the interaction of the news-watchers and the momentum traders. Their setup, however, is different from the information asymmetry setting of this paper. A direct application of information asymmetry to momentum and reversal appears in Albuquerque and Miao(2014), but in contrast to this paper, their model employs a different model with the uninformed acting as contrarians.\\
\indent The contributions of this paper is in two-fold. First, this paper employs the simplest asymmetric information framework to explain the stock market patterns. Without making further assumptions about investor's behavior or trading, this paper shows how the natural setting of information asymmetry is sufficient in explaining reversal coupled with heterogeneous beliefs. Second, the model seeks to explain price autocorrelation in a context where there is no exogenous time-varying mechanism driving the price change in a certain direction. The same heterogeneity persists throughout the model but creates price dynamics over different periods. Also, the basic framework could be easily extended to explain other stock market patterns such as co-movement and lead-lag effects. These contributions together demonstrate how strong information asymmetry could be in explaining the stock price trends in a unified framework.\\
\indent This paper is organized into five sections. Section 2 outlines the single asset model to provide a clear intuition for momentum and reversal patterns. The third section extends the model to multiple assets to analyze the cross-sectional momentum and reversal, illustrating the several of the related empirical patterns as well. The penultimate section uses the multiple asset model to explain other asset pricing patterns, and the final section concludes.

\section{Single Asset Model}

\subsection{Basic Framework}
In this finite discrete time model, there are four periods denoted by $t \in \{0,1,2,3\}$ with a single risky asset and a risk-less asset. The risky asset pays off $D\sim\mathcal{N}(\bar{D},\sigma^2_D)$ in the final period $t=3$ with no
interim payments. The supply of the risky asset is stochastic $1+\theta_t$ with $\theta_t\sim\mathcal{N}(0, \sigma^2_{\theta,i})$ in $t=1,2$ and fixed at 1 in the ex-ante and final periods $t=0,3$. The risk-less asset has a certain payoff of 1 in the last period and is in fixed supply of $B$. With no time discounting and using the risk-less asset as the numeraire, the risky asset's price is $S_t$ in period $t$. Initially, each agent is endowed with per capita amount of the risky and the risk-less assets.\\
\indent The economy has a unit measure of agents of two types $i\in\{1,2\}$ with $\pi$ proportion of the $i=1$ agents, and $1-\pi$ of $i=2$. The informed type $i=1$ receive a news in the form of a homogeneous noisy signal $s\sim\mathcal{N}(D,\sigma^2_s)$ about the risky asset payoff between periods 0 and 1. The uninformed $i=2$ can only infer the signal from prices. The information asymmetry is coupled with incomplete markets in the model because the supply of the risky asset is stochastic. Hence, the uninformed investors cannot perfectly infer the signal from the prices due to the additional uncertainty. This incomplete market setting is required for the information asymmetry to persist in the economy as proved by Grossman(1981).\\
\indent In this asymmetric information framework, the two types agree on all the hyper-parameters in the model except for the variance of the stochastic supply $\sigma^2_{\theta,i}$. Assume that $\sigma^2_{\theta,1}>\sigma^2_{\theta,2}$ so that the informed investors incorrectly and heterogeneously believe in a higher supply variance than the uninformed. The uninformed's distribution is the true one throughout the model, so $\sigma^2_\theta=\sigma^2_{\theta,2}$ with the two expressions used interchangeably throughout for convenience. The setting models the different beliefs about the signal through heterogeneous beliefs in the variance of the stochastic supply. Since the uninformed can only infer the signal with less accuracy due to the noisy supply, the heterogeneous beliefs about the noisy supply also applies to the distribution of the signal.\\
\indent In the ex-ante $t=0$, all agents face two uncertainties for the next period in the noisy signal $s$ and the stochastic supply $1+\theta_1$. Before period 1, news arrives through $s$, and agents trade based on their heterogeneous beliefs about the signal validity in $t=1$. This heterogeneous belief is the only friction in this asymmetric information economy as it is assumed that both types of agents are dogmatic about their beliefs. In period 1, the investors face an uncertainty in the noisy supply for the next period. In the last period which models the long run, the final payoff is realized, so $S_3=D$. While the information set for the informed $i=1$ is $\mathcal{F}_{t,1}=\{s,S_\tau,\sigma^2_{\theta,1}:\tau=0,\dots,t\}$, for the uninformed $i=2$, $\mathcal{F}_{t,2}=\{S_\tau,\sigma^2_\theta:\tau=0,\dots,t\}$ since knowing the price does not reveal both $s$ and $\theta_1$.\\
\indent Throughout this paper, subscripts for variables denote time $t$ and agent type $i$ in that order with any of the subscripts suppressed when clear from context. The agents are myopic and seek to maximize their wealth in the next period, which is assumed to simplify the computation and is without loss of generality as proven in a verification theorem in the appendix. All investors are assumed to know ex-ante their own types and the market structure, and to maximize an exponential utility function of $U(W) = -\exp\{-\alpha W\}$ with the risk aversion coefficient $\alpha$.

\subsection{Model Equilibrium}
In the following equilibrium, each investor chooses his risky asset demand in each period to maximize his wealth in the next period. The optimization is solved backwards starting with $t=2$ with the expectations and variances conditional on the information sets in the period. The maximization problem in time $t$ is as follows.
\begin{equation*}
\max_{x_t}E_t(U(W_{t+1}))=E_t(-\exp\{-\alpha(W_t+x_t(S_{t+1}-S_t))\})
\end{equation*}
where $x_t$ is the risky asset demand, and $W_t$, the wealth in period $t$. With the normally distributed signal, the first order condition gives the solution as
\begin{equation*}
x_t=\frac{E_t(S_{t+1})-S_t}{\alpha Var_t(S_{t+1})}
\end{equation*}
Following the method in Vayanos and Wang(2012), conjecture a price function that is affine in the signal $s$.
\begin{equation*}
S_t=a_t+b_t(s-\bar{D}+c_t\theta_t)
\end{equation*}
for $t=1,2$.\\
\indent With signal $s$, the normal distribution results in
\begin{align*}
&E_{2,1}(S_3)=E_1(D)=\bar{D}+\beta_s(s-\bar{D})\\
&Var_{2,1}(S_3)=Var_1(D)=\beta_s\sigma^2_s\\
&E_{2,2}(S_3)=E_2(D)=\bar{D}+\beta_\xi\xi_2\\
&Var_{2,2}(S_3)=Var_2(D)=\beta_\xi\sigma^2_\xi
\end{align*}
where $\beta_s=\sigma^2_D/(\sigma^2_D+\sigma^2_s)$, $\beta_\xi=\sigma^2_D/(\sigma^2_D+\sigma^2_s+c_2^2\sigma^2_\theta)$, and $\sigma^2_\xi=\sigma^2_s+c_2^2\sigma^2_\theta$.\\
\indent The market-clearing conditions are
\begin{align*}
x_{t,1}+x_{t,2}=1
\end{align*}
for $t=0,1,2$. Substituting in the necessary variables, the resulting equation is affine with respect to $s$ and $\theta_t$. The coefficients and the constant term need to equal zero for the equation to hold for any values of $s$ and $\theta_t$. The result is a system of equations for each coefficient and unknowns. The variables with asterisks denote those under the incorrect expectation of the informed; for example $a_2^*=E_1(a_2)$. $Var_i(S_1)$ denote the variance of $S_1$ under the expectation of type $i$ agent. The resulting prices for each period is found as follows.
\begin{proposition}
The prices are
\begin{align*}
S_3=&D\\
S_2=&a_2+b_2(s-\bar{D}+c_2\theta_2)\\
S_1=&a_1+b_1(s-\bar{D}+c_1\theta_1)\\
S_0=&\frac{\pi Var_2(S_1)a_1^*+(1-\pi)Var_1(S_1)a_1-\alpha Var_1(S_1)Var_2(S_1)}{\pi Var_2(S_1)+(1-\pi)Var_1(S_1)}
\end{align*}
where
\begin{align*}
a_2=&\bar{D}-\alpha\frac{(\beta_s-b_2)\beta_\xi(\sigma^2_s+c_2^2\sigma^2_\theta)+(\beta_\xi-b_2)\beta_s\sigma^2_s}{\beta_s-\beta_\xi}\\
b_2=&\frac{\pi\beta_\xi(\sigma^2_s+c_2^2\sigma^2_\theta)\beta_s+(1-\pi)\beta_s\sigma^2_s\beta_\xi}{\pi\beta_\xi(\sigma^2_s+c_2^2\sigma^2_\theta)+(1-\pi)\beta_s\sigma^2_s} &&c_2=\frac{\alpha\sigma^2_s}{\pi}\\
a_1=&\frac{(b_2-b_1)(a_2^*-\alpha (b_2^*)^2c_2^2\sigma_{\theta,1}^2)+(b_2^*-b_1)(a_2-\alpha b_2^2c_2^2\sigma^2_\theta)}{(b_2-b_1)+(b_2^*-b_1)}\\
b_1=&\frac{\pi b_2^2\sigma^2_\theta b_2^*+(1-\pi)(b_2^*)^2\sigma_{\theta,1}^2 b_2}{\pi b_2^2\sigma^2_\theta+(1-\pi)(b_2^*)^2\sigma_{\theta,1}^2} &&c_1=\frac{\alpha b_2^*c_2^2\sigma_{\theta,1}^2}{\pi}
\end{align*}
\end{proposition}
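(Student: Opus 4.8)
The plan is to solve the myopic equilibrium backwards, period by period, exploiting the affine conjecture $S_t = a_t + b_t(s - \bar D + c_t \theta_t)$ for $t=1,2$ and matching coefficients in the market-clearing condition. First I would treat $t=2$. Using the stated conditional moments $E_{2,i}(S_3)$ and $Var_{2,i}(S_3)$ (which follow from normal updating: the informed condition on $s$, the uninformed on the price-implied signal $\xi_2$ with noise variance $\sigma_\xi^2 = \sigma_s^2 + c_2^2\sigma_\theta^2$), I would write each type's demand $x_{2,i} = (E_{2,i}(S_3) - S_2)/(\alpha Var_{2,i}(S_3))$ and substitute into $x_{2,1} + x_{2,2} = 1 + \theta_2$. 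Since $S_2$ is affine in $s$ and $\theta_2$, after clearing denominators the equation becomes a linear identity in $(s - \bar D)$ and $\theta_2$; forcing the coefficient of $(s-\bar D)$, the coefficient of $\theta_2$, and the constant each to vanish yields three equations. The $\theta_2$-coefficient equation gives $c_2 = \alpha\sigma_s^2/\pi$ (the uninformed's inference, with only type $1$'s mass $\pi$ absorbing supply shocks through the signal channel); the $(s-\bar D)$-coefficient equation gives $b_2$ as the stated precision-weighted average of $\beta_s$ and $\beta_\xi$; and the constant equation gives $a_2$. Here one must be careful that $\xi_2$, the uninformed's signal extracted from $S_2$, equals $s + c_2\theta_2$ up to the observed affine transformation, so that $\beta_\xi$ and $\sigma_\xi^2$ are exactly as defined.

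Next I would move to $t=1$. The key subtlety is that the informed, forecasting $S_2$, use their \emph{incorrect} supply variance $\sigma_{\theta,1}^2$, so $E_{1,1}(S_2) = a_2^* + b_2^*(s - \bar D)$ and $Var_{1,1}(S_2) = (b_2^*)^2 c_2^2 \sigma_{\theta,1}^2$, where starred quantities are the $t=1$ conditional expectations of the $t=2$ coefficients under the informed's beliefs (which differ because $\sigma_{\theta,1}^2$ enters $\beta_\xi$, hence $b_2$, etc.). The uninformed instead see only $S_1$, infer $\xi_1 = s + c_1\theta_1$, and forecast $S_2$ using the true $\sigma_\theta^2$; their relevant variance is $b_2^2 c_2^2\sigma_\theta^2$ plus the residual variance of $s$ given $\xi_1$. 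Writing $x_{1,i} = (E_{1,i}(S_2) - S_1)/(\alpha Var_{1,i}(S_2))$, substituting $S_1 = a_1 + b_1(s-\bar D + c_1\theta_1)$, and imposing $x_{1,1}+x_{1,2} = 1 + \theta_1$ gives again three scalar equations after matching the coefficients of $(s-\bar D)$, $\theta_1$, and the constant. The $\theta_1$-equation yields $c_1 = \alpha b_2^* c_2^2\sigma_{\theta,1}^2/\pi$; the $(s - \bar D)$-equation yields $b_1$ as the stated weighted harmonic-type average; the constant yields $a_1$ as the stated weighted average of the two "certainty-equivalent" intercepts $a_2^* - \alpha(b_2^*)^2c_2^2\sigma_{\theta,1}^2$ and $a_2 - \alpha b_2^2 c_2^2\sigma_\theta^2$.

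Finally, for $t=0$ there is no signal and no supply shock ($\theta_0 = 0$, supply fixed at $1$), so $S_0$ is a constant. Each type forecasts $S_1$; the informed do not yet have $s$ but anticipate its arrival, so both types face the same predictive mean $\bar D$ for $S_1$ but different predictive variances $Var_i(S_1)$ (the informed's variance of $a_1 + b_1(s - \bar D + c_1\theta_1)$ integrates out $s$ and $\theta_1$ under $\sigma_{\theta,1}^2$, the uninformed's under $\sigma_\theta^2$, and there is also a difference through $a_1^*$ versus $a_1$ if the informed's anticipated intercept differs). Demands are $x_{0,i} = (E_{0,i}(S_1) - S_0)/(\alpha Var_i(S_1))$; clearing $x_{0,1} + x_{0,2} = 1$ and solving the single linear equation for $S_0$ gives the stated precision-weighted formula. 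Throughout, $S_3 = D$ is imposed by assumption. I expect the main obstacle to be bookkeeping of the starred quantities: verifying precisely what the informed believe $a_2, b_2, c_2$ equal when they use $\sigma_{\theta,1}^2$ in place of $\sigma_\theta^2$, and ensuring these starred coefficients are internally consistent with the informed's (mistaken) model of how the uninformed will trade in $t=2$ — in effect one must solve the $t=2$ system a second time under the informed's parameters to obtain $a_2^*, b_2^*, c_2^*$, then feed those into the $t=1$ equations.
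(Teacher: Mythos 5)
Your overall strategy is exactly the paper's: solve backwards from $t=2$, plug each type's demand $x_{t,i}=(E_{t,i}(S_{t+1})-S_t)/(\alpha\,Var_{t,i}(S_{t+1}))$ into market clearing, and match the coefficients of the signal innovation, the supply shock, and the constant under the affine conjecture, tracking the starred (informed-belief) coefficients at $t=1$ and solving a single linear equation at $t=0$. In that sense the architecture is right and matches the appendix proof step for step.

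There is, however, one substantive point where your plan diverges from what actually delivers the stated formulas. At $t=1$ you write that the uninformed's forecast variance of $S_2$ is $b_2^2c_2^2\sigma_\theta^2$ \emph{plus} the residual variance of $s$ given the price-implied signal $\xi_1$, which would also force a Bayesian shrinkage factor in their conditional mean of $S_2$. The paper's derivation does neither: it takes $E_2(S_2)=a_2+\tfrac{b_2}{b_1}(S_1-a_1)$, i.e. the uninformed substitute the raw price innovation $s-\bar D+c_1\theta_1$ into the $t=2$ price function without shrinkage, and uses $Var_2(S_2)=b_2^2c_2^2\sigma_\theta^2$ only (uncertainty coming solely from $\theta_2$). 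If you carry the extra residual-variance term and the corresponding shrinkage, the coefficient-matching system at $t=1$ changes and you will not recover the stated $b_1$, $c_1$, and $a_1$; so as written your plan proves a different (arguably more fully Bayesian) set of coefficients, not the ones in the proposition. Relatedly, at $t=0$ the predictive means entering the demands are $a_1^*$ and $a_1$ (not $\bar D$ for both types, as you first state, though your parenthetical and the final formula get this right). Fixing the $t=1$ conditional moments to match the paper's treatment makes the rest of your argument go through as described.
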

The proofs of all propositions and theorems are in the appendix.\\
\indent In the previous proposition, the constant terms $a_1$, $a_2$, and $S_0$ measure the ex-ante payoff expectation $\bar{D}$ and the risk-adjustments. The coefficients $b_1$ and $b_2$ represent the impact of the noisy signal $s$ on the prices. $b_1<b_2$, so the signal impact on price increases from $t=1$ to $t=2$. The initial signal impact is dampened by the incorrect belief of the informed investors about the noisy supply. As it turns out next, the inequality $b_1<b_2$ generates the momentum effect.

\subsection{Momentum and Reversal in Price}
With the prices for $t=0,1,2,3$, the equilibrium for the model has been computed. The following section explores the empirical implications of the model including the main result of this paper on momentum and reversal. The model proposed in this paper predicts momentum and reversal in the price level of a single risky asset. Hence, the stock price patterns in this paper are not the relative ones in the cross-section, but rather a time-series version of momentum and reversal for a single stock. Asness, Moskowitz, and Pedersen(2013) documents the anomalies in the time-series, so their work is the most relevant for the empirical connections in this paper. Also, it is worth noting that most empirical studies examine momentum and reversal in terms of returns, not price levels as this paper does. While required for a stronger empirical connection, this paper does not make this distinction with the main autocorrelation results still valid since returns and price levels are correlated.\\
\indent With the prices for each period, the autocorrelations of the prices can be calculated to show whether the model predicts momentum or reversal. The following measures of covariance are used to check if the model results are consistent with the observed momentum and reversal patterns.
\begin{align*}
&\gamma_m=Cov(S_2-S_1,S_1-S_0)\\
&\gamma_r=Cov(D-S_2,S_2-S_1)
\end{align*}
The empirical evidence requires that $\gamma_m>0$ and $\gamma_r<0$. The following proposition shows that the single asset model can result in a momentum in pricing.
\begin{proposition}
Momentum measure is calculated as follows.
\begin{align*}
\gamma_m=Cov(S_2-S_1,S_1-S_0)=(b_2-b_1)b_1\sigma^2_s-b_1^2c_1^2\sigma^2_\theta
\end{align*}
The measure is positive if
\begin{align*}
\pi b_2^2\sigma^2_s(b_2-b_2^*)-b_1c_1^2>0
\end{align*}
in which case, momentum in prices arises.
\end{proposition}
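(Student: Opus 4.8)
The plan is to reduce the covariance to the stochastic part of the two price changes, evaluate it against the primitive shocks, and then feed in the closed forms of Proposition~1. First I would observe that $S_0$ and the intercepts $a_1,a_2$ are non-stochastic — each is a deterministic function of the hyper-parameters and of the conditional variances $Var_i(S_1)$ — so $\gamma_m=Cov(S_2-S_1,\,S_1-S_0)=Cov(S_2-S_1,\,S_1)$. Substituting the affine prices, each price change becomes an affine function of the three mutually independent shocks: the news innovation $s-\bar D$ enters $S_1-S_0$ with coefficient $b_1$ and $S_2-S_1$ with coefficient $b_2-b_1$; the supply shock $\theta_1$ enters them with coefficients $b_1c_1$ and $-b_1c_1$; and $\theta_2$ appears only in $S_2-S_1$. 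Since $\theta_2$ is independent of everything in $S_1-S_0$ it drops out, and bilinearity of covariance over the remaining independent shocks gives
\begin{equation*}
\gamma_m=(b_2-b_1)b_1\,\sigma^2_s-b_1^2c_1^2\,\sigma^2_\theta ,
\end{equation*}
the minus sign on the $\theta_1$ term reflecting that $\theta_1$ moves $S_2-S_1$ opposite to $S_1-S_0$, and $\sigma^2_\theta=\sigma^2_{\theta,2}$ being the true supply variance governing the realized path.

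For the sign I would factor $\gamma_m=b_1\big[(b_2-b_1)\sigma^2_s-b_1c_1^2\sigma^2_\theta\big]$. By inspection of the Proposition~1 formulas, $b_2$ and $b_2^*$ are ratios of sums of strictly positive terms, hence positive, and $b_1$ is the convex combination
\begin{equation*}
b_1=\dfrac{\pi b_2^2\sigma^2_\theta\,b_2^*+(1-\pi)(b_2^*)^2\sigma^2_{\theta,1}\,b_2}{\pi b_2^2\sigma^2_\theta+(1-\pi)(b_2^*)^2\sigma^2_{\theta,1}}
\end{equation*}
of $b_2^*$ and $b_2$, so $b_1>0$ and, cancelling the $b_2$ terms in numerator and denominator,
\begin{equation*}
b_2-b_1=\frac{\pi b_2^2\sigma^2_\theta\,(b_2-b_2^*)}{\pi b_2^2\sigma^2_\theta+(1-\pi)(b_2^*)^2\sigma^2_{\theta,1}} .
\end{equation*}
Hence $\gamma_m>0$ if and only if $(b_2-b_1)\sigma^2_s>b_1c_1^2\sigma^2_\theta$; substituting the identity above, cancelling $\sigma^2_\theta$, and clearing the positive denominator of $b_1$ reduces this to the displayed condition $\pi b_2^2\sigma^2_s(b_2-b_2^*)-b_1c_1^2>0$ (keeping $c_1=\alpha b_2^*c_2^2\sigma^2_{\theta,1}/\pi$ in the form used in Proposition~1). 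Note that the condition in particular forces $b_2>b_2^*$, which is exactly the inequality $b_1<b_2$ singled out after Proposition~1 as the origin of momentum.

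The covariance expansion in the first step is routine; the care is all in the second. The main obstacle I anticipate is the sign-and-denominator bookkeeping: one has to justify $b_1>0$ and — where the statement tacitly uses it — $b_2>b_2^*$, so that the passage from "$\gamma_m>0$" to "the bracket is positive" is an equivalence, and one has to carry the denominator of $b_1$ through cleanly so that the final inequality emerges in exactly the normalized form stated. If one wanted the stronger result of translating the condition into primitive restrictions on $(\pi,\alpha,\sigma^2_s,\sigma^2_D,\sigma^2_{\theta,1},\sigma^2_{\theta,2})$, one would also need the monotonicity of $b_2$ in $c_2^2\sigma^2_\theta$ — visible from writing $\beta_\xi\sigma^2_\xi=\sigma^2_D\sigma^2_\xi/(\sigma^2_D+\sigma^2_\xi)$ with $\sigma^2_\xi=\sigma^2_s+c_2^2\sigma^2_\theta$, an increasing function — to pin down the sign of $b_2-b_2^*$; but that is not needed for the proposition as stated.
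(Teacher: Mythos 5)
Your proposal is correct and follows essentially the same route as the paper: reduce $\gamma_m$ to $Cov(S_2-S_1,S_1)$ since $S_0$ and the intercepts are deterministic, expand in the independent shocks to get $(b_2-b_1)b_1\sigma^2_s-b_1^2c_1^2\sigma^2_\theta$, and then use the closed form of $b_1$ to express $b_2-b_1$ as a positive multiple of $b_2-b_2^*$ (the paper instead verifies $b_2>b_2^*$ directly via $\partial b_2/\partial\sigma^2_\theta<0$, a detail you rightly note is not needed for the statement). The only imprecision in your final reduction --- that clearing the denominator of $b_1$ leaves that denominator attached to the $b_1c_1^2$ term rather than yielding the displayed inequality verbatim --- is exactly the normalization the paper's own proof silently makes, so it is not a gap relative to the paper.
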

The momentum result depends on the condition in the above proposition. There are many parameters to pin down, so a precise result is not obtained. But the above proposition shows that the price momentum could arise in a asymmetric information setting with heterogeneous beliefs.\\
\indent The price autocorrelation depends especially on the coefficients $b_1$ and $b_2$. A rough outline of the idea is that $b_2$ is the weighted average of $\beta_s$ and $\beta_\xi$ and that $b_1$ is that of $b_2$ and $b_2^*$. For momentum, $b_2-b_1>0$ for the signal impact to be higher in $t=2$. This inequality holds if $b_2^*<b_2$, that is, the informed's incorrect expectation of $b_2$ is less than the true expectation. With $\sigma_{\theta,1}>\sigma_{\theta,2}$, the inequality is satisfied and momentum may occur under certain conditions.\\
\indent The momentum effect results when the heterogeneity in beliefs is large enough compared to the supply shock. In the above final expression, there are two opposing effects from the two frictions. On the one hand, the heterogeneous beliefs drives the momentum effect which is the first term in the autocorrelation measure. On the other hand, the stochastic supply creates the reversal effect reflected in the second term. As long as the heterogeneity friction is large compared to that of the supply, the first effect dominates the second, resulting in the momentum pattern.\\
\indent The momentum effect is larger if there are more informed traders in the economy as $\partial\gamma_m/\partial\pi>0$. As there are more agents with incorrect beliefs about the noisy supply distribution, the signal impact on price is further dampened in the first period, resulting in an increased momentum. In a similar manner, defining $\Delta\sigma^2_\theta=\sigma^2_{\theta,1}-\sigma^2_{\theta,2}$, the momentum in price is stronger with more heterogeneity in beliefs as $\partial\gamma_m/\partial\Delta\sigma^2_\theta>0$. As is usually the case in frictions, a higher risk-aversion $\alpha$ leads to a higher effect of the friction in momentum.\\
\indent The next proposition proves that reversal in price results from the model.
\begin{proposition}
Reversal measure is calculated as follows.
\begin{align*}
\gamma_r=Cov(D-S_2,S_2-S_1)=-b_2(b_2-b_1)\sigma^2_s-b_2^2c_2^2\sigma^2_\theta
\end{align*}
The measure is always negative as $b_2>b_2^*$.
\end{proposition}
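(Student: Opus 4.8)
The plan is to obtain the closed form for $\gamma_r$ by a direct covariance expansion from the affine price formulas of Proposition 1, and then to sign the two resulting terms. As with the momentum measure of the preceding proposition, the covariance is understood conditional on the realized fundamental $D$, so that the only random quantities are the signal innovation $\varepsilon_s := s - D$, with $Var(\varepsilon_s) = \sigma^2_s$, and the supply shocks $\theta_1, \theta_2$, each of variance $\sigma^2_\theta$; these three are mutually independent. (Conditioning on $D$ is what isolates the mispricing dynamics: without it an extra $(1-b_2)(b_2-b_1)\sigma^2_D$ term, of the wrong sign for reversal, would appear.)

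First I would substitute $S_3 = D$, $S_2 = a_2 + b_2(s - \bar D + c_2\theta_2)$, $S_1 = a_1 + b_1(s - \bar D + c_1\theta_1)$ and write $s - \bar D = \varepsilon_s + (D - \bar D)$. Since $D$, $\bar D$, $a_1$, $a_2$ and $S_0$ are all non-random once $D$ is fixed, they contribute nothing to the covariance, and
\begin{align*}
D - S_2 &= \text{(const)} - b_2\varepsilon_s - b_2 c_2\theta_2,\\
S_2 - S_1 &= \text{(const)} + (b_2 - b_1)\varepsilon_s + b_2 c_2\theta_2 - b_1 c_1\theta_1.
\end{align*}
By bilinearity of covariance every cross term pairing two distinct independent innovations vanishes, leaving only the $\varepsilon_s$-with-$\varepsilon_s$ term and the $\theta_2$-with-$\theta_2$ term: $\gamma_r = -b_2(b_2 - b_1)\sigma^2_s - b_2^2 c_2^2\sigma^2_\theta$. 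This step is routine; the only care needed is in identifying which terms remain random after conditioning on $D$.

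It then remains to show both surviving terms are strictly negative, i.e. that $b_2 > 0$ and $b_2 > b_1$. Positivity of $b_2$ is immediate from Proposition 1, its numerator and denominator being sums of products of strictly positive quantities. For $b_2 > b_1$, the formula for $b_1$ in Proposition 1 exhibits $b_1$ as a convex combination of $b_2^*$ and $b_2$ with strictly positive weights, so it suffices to prove $b_2^* < b_2$. Since $b_2^*$ is just $b_2$ with the true supply variance $\sigma^2_\theta = \sigma^2_{\theta,2}$ replaced by the informed agents' $\sigma^2_{\theta,1}$ (note $c_2 = \alpha\sigma^2_s/\pi$ does not involve the supply variance), and $\sigma^2_{\theta,1} > \sigma^2_{\theta,2}$, this reduces to showing that $b_2$ is strictly decreasing in $\sigma^2_\theta$. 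I expect this monotonicity to be the only genuinely substantive step, because the expression for $b_2$ superimposes two competing channels: raising $\sigma^2_\theta$ lowers $\beta_\xi$ (pushing $b_2$ down) but also reweights the average toward $\beta_s > \beta_\xi$ (pushing $b_2$ up).

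The clean way to resolve this is to simplify: factoring $\beta_s\beta_\xi$ out of the numerator of $b_2$ and using $\beta_s = \sigma^2_D/(\sigma^2_D + \sigma^2_s)$, $\beta_\xi = \sigma^2_D/(\sigma^2_D + \sigma^2_\xi)$ with $\sigma^2_\xi := \sigma^2_s + c_2^2\sigma^2_\theta$, one obtains
\begin{equation*}
\frac{1}{b_2} = 1 + \frac{1}{\sigma^2_D}\cdot\frac{\sigma^2_s\sigma^2_\xi}{\pi\sigma^2_\xi + (1-\pi)\sigma^2_s} = 1 + \frac{1}{\sigma^2_D}\cdot\frac{\sigma^2_s}{\pi + (1-\pi)\sigma^2_s/\sigma^2_\xi}.
\end{equation*}
The right-hand side is manifestly increasing in $\sigma^2_\xi$, hence in $\sigma^2_\theta$, so $b_2$ is strictly decreasing in $\sigma^2_\theta$. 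Therefore $b_2^* < b_2$, hence $b_2 - b_1 > 0$, and both terms of $\gamma_r = -b_2(b_2 - b_1)\sigma^2_s - b_2^2 c_2^2\sigma^2_\theta$ are strictly negative, which is the claim.
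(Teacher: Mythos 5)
Your proof is correct and follows essentially the same route as the paper: expand the covariance of the affine prices (with $D$ held fixed, so only $\sigma^2_s$ and $\sigma^2_\theta$ survive), and reduce negativity to $b_2-b_1>0$, which follows from $b_2^*<b_2$ because $b_1$ is a positively weighted average of $b_2$ and $b_2^*$. Your explicit computation of $1/b_2$ in fact supplies the monotonicity of $b_2$ in $\sigma^2_\theta$ that the paper merely asserts as $\partial b_2/\partial\sigma^2_\theta<0$, so your write-up is, if anything, more complete than the paper's.
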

\indent The reversal in price occurs because of the stationarity induced by the stochastic supply. In $t=1$, the stochastic supply $1+\theta_1$ is realized, which is expected to return to 1 in $t=2$ as $E_1(\theta_2)=0$. Likewise, the realized supply $1+\theta_2$ in $t=2$ is certain to revert back to 1 as the supply is fixed in $t=3$. Hence, the reversal effect is larger if the supply is noisier, $\partial\gamma_r/\partial\sigma^2_\theta$.\\
\indent A symmetric information case with noisy supply is sufficient to produce reversal in price, but the asymmetric information setting in this model amplifies the effect. The noisy signal is also stationary over the entire horizon as $s\to D$ in the last period $t=3$. Therefore, the additional stationarity in $s$ also contributes to the price reversal. So, noisier signal increases the effect of reversal as in $\partial\gamma_r/\partial\sigma^2_s$. As price is pulled away from the fundamental value due to the forces driving the initial momentum effect, the price reverts in the last period as the final payoff is realized as $D$.\\
\indent The single asset model provides the answer to the key question that make rational models difficult to produce momentum. Suppose the signal is positive, then why would rational investors want to buy less when the expected return has increased? In period 1, the informed agents who incorrectly believe that the expected return will not increase as much will submit smaller demand for the asset as the current price is higher than that justified by their dogmatic belief. The heterogeneous beliefs setup allows this seemingly irrational trading in a bounded rationality model.

\section{Multiple Asset Model}
\subsection{Framework Extension}
Extending the previous single asset model to a one with multiple assets is straightforward. The results in this section are simply the counterparts of the previous single asset case in vector forms. With the same setup and assumptions as before, now the economy has $N+1$ number of assets denoted by $m=\{0,1,\dots,N\}$. Asset $m=0$ is the previous risk-free asset, while the other assets $m$ are risky with final payoff $D_m\sim\mathcal{N}(\bar{D_m},\sigma^2_m)$. For each asset, the informed investors receive a homogeneous signal about the final payoff $D_m$ denoted as $s_m\sim\mathcal{N}(D,\sigma^2_{m,s})$. The $N\times N$ covariance matrices between the payoffs and the signals of the risky assets are $\Sigma_D$ and $\Sigma_s$ with row $m$ and column $n$ element $\sigma_{m,n}$, the covariance the variables between the assets $m$ and $n$. The subscripts denote time $t$, asset $m$, and agent type $i$ in that order, so the demand for a risky asset $m$ in period $t$ by type $i$ is denoted as $x_{t,m,i}$. The uppercase variables or those without the asset subscript will denote a vector of variables so that $S_t$ is the $N\times 1$ vector of prices at time $t$.\\
\indent Again, there is the $N\times 1$ vector of noisy supplies $\textbf{1}+\Theta_t\sim\mathcal{N}(\textbf{0},\Sigma_\theta)$ for $t=1,2$ where bold character denotes vectors such that $\textbf{0}$ is a $N\times1$ vector of 0s. The heterogeneous beliefs arises because the informed investors $i=1$ have incorrect distribution of the stochastic supply with covariance matrix $\Sigma_{\theta,1}$. The true one known by the uninformed $i=2$ is denoted as $\Sigma_{\theta,2}=\Sigma_\theta$ with the assumption that $\Sigma_{\theta,1}-\Sigma_{\theta,2}$ is positive definite. This friction creates the price momentum and reversal as before.

\subsection{Model Equilibrium}
The equilibrium is found using a similar method from the previous single asset case. Given the same exponential utility function, the optimization is solved backwards from $t=2$ using the same affine price function conjecture as before.
\begin{equation*}
S_t=A_t+B_t(s-\bar{D}+C_t\Theta_t)
\end{equation*}
for $t=1,2$ where the equation is in vector notation with a $N\times 1$ constant vector $A_t$, $N\times 1$ signal vector $s$, and a $N\times N$coefficient matrices $B_t$ and $C_t$.\\
\indent The expectations and variances for each belief are the same as before, only that they are now in vector notation such that $\beta_s$ is an $N\times N$ vector of $\beta_{m,s}$ for each asset $m$. The market-clearing conditions also have the same form, and substituting and setting the two coefficients equal to zero as before give the prices in the following proposition. $\Sigma_i(S_1)$ denote the covariance matrix of $S_1$ under type $i$ expectation.
\begin{proposition}
The prices are
\begin{align*}
S_2=&A_2+B_2(s-\bar{D}+C_2\Theta_2)\\
S_1=&A_1+B_1(s-\bar{D}+C_1\Theta_1)\\
S_0=&(\pi Var_2(S_1)+(1-\pi)Var_1(S_1))^{-1}(\pi Var_2(S_1)A_1^*\\
&+(1-\pi)Var_1(S_1)A_1-\alpha Var_1(S_1)Var_2(S_1))
\end{align*}
where
\begin{align*}
A_2=&\bar{D}-\alpha(\beta_s-\beta_\xi)^{-1}((\beta_s-B_2)(\Sigma_s+C_2\Sigma_\theta C_2')\beta_\xi'+(\beta_\xi-B_2)\Sigma_s\beta_s')\\
B_2=&(\pi\beta_\xi(\Sigma_s+C_2\Sigma_\theta C_2')+(1-\pi)\beta_s\Sigma_s)^{-1}\\
&\times(\pi\beta_\xi(\Sigma_s+C_2\Sigma_\theta C_2')\beta_s+(1-\pi)\beta_s\Sigma_s\beta_\xi)\\
C_2=&\alpha\pi^{-1}\Sigma_s\\
A_1=&((B_2-B_1)(A_2^*-\alpha B_2^*C_2\Sigma_{\theta,1}(B_2^*C_2)')\\
&+(B_2^*-B_1)(A_2-\alpha B_2C_2\Sigma_\theta(B_2C_2)')((B_2-B_1)+(B_2^*-B_1))^{-1}\\
B_1=&(\pi B_2\Sigma_\theta B_2'+(1-\pi)B_2^*\Sigma_{\theta,1}(B_2^*)')^{-1}\\
&\times(\pi B_2^*B_2\Sigma_\theta B_2'+(1-\pi)B_2B_2^*\Sigma_{\theta,1}(B_2^*)')\\
C_1=&\alpha\pi^{-1}B_2^*C_2C_2^*\Sigma_{\theta,1}
\end{align*}
\end{proposition}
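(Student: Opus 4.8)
The plan is to mirror the backward-induction argument that establishes the single-asset price proposition, now carried out with matrices in place of scalars. First I would solve the $t=2$ subproblem. Using the myopic first-order condition $x_{t,i}=(\alpha\,Var_{t,i}(S_{t+1}))^{-1}(E_{t,i}(S_{t+1})-S_t)$ together with $S_3=D$ and the conditional moments $E_{2,1}(D)=\bar D+\beta_s(s-\bar D)$, $Var_{2,1}(D)=\beta_s\Sigma_s$ for the informed and $E_{2,2}(D)=\bar D+\beta_\xi\xi_2$, $Var_{2,2}(D)=\beta_\xi\Sigma_\xi$ for the uninformed --- where $\xi_2$ is the uninformed's linear projection of $D$ extracted from the price and $\Sigma_\xi=\Sigma_s+C_2\Sigma_\theta C_2'$ is its residual covariance --- I would substitute the affine conjecture $S_2=A_2+B_2(s-\bar D+C_2\Theta_2)$ into the market-clearing identity $x_{2,1}+x_{2,2}=\mathbf{1}$. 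Since $s$ and $\Theta_2$ range freely, the identity splits into three matrix equations: the coefficient of $\Theta_2$ pins down $C_2$, the coefficient of $s-\bar D$ gives $B_2$ as the stated precision-weighted average of $\beta_s$ and $\beta_\xi$, and the constant term yields $A_2$.

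Next I would repeat one period back. At $t=1$ the relevant next-period price is $S_2$, so I need $E_{1,i}(S_2)$ and $Var_{1,i}(S_2)$; here the informed apply their own (incorrect) law for $\Theta$, which is exactly why the perceived $t=2$ coefficients $A_2^*,B_2^*$ (the images of $A_2,B_2$ under $\Sigma_\theta\mapsto\Sigma_{\theta,1}$, i.e. $A_2^*=E_1(A_2)$, $B_2^*=E_1(B_2)$) appear, while the uninformed again run a projection step on $S_1$ with residual covariance built from $C_1$. Substituting $S_1=A_1+B_1(s-\bar D+C_1\Theta_1)$ into $x_{1,1}+x_{1,2}=\mathbf{1}$ and matching the $\Theta_1$, $s$, and constant parts produces $C_1$, $B_1$, and $A_1$ in the displayed forms. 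Finally, at $t=0$ there is no signal yet: agents face the joint uncertainty in $s$ and $\Theta_1$, so I compute $E_{0,i}(S_1)$ and $Var_{0,i}(S_1)=Var_i(S_1)$, form the two demands, and solve $x_{0,1}+x_{0,2}=\mathbf{1}$ for the deterministic $N\times1$ vector $S_0$, obtaining the stated expression.

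The main obstacle, relative to the single-asset proof, is the bookkeeping with non-commuting matrices: each weighted average must carry the inverse of the total-precision matrix on the correct side, and the multivariate filtering formulas for the uninformed must be derived in their proper matrix form rather than copied verbatim from the scalar argument. A secondary but genuine point is well-posedness --- I would need to check that every matrix being inverted is nonsingular: $\beta_s-\beta_\xi$ (nonzero once $C_2\Sigma_\theta C_2'\neq 0$, which follows from $\Sigma_s\succ 0$), the precision-weighting combinations appearing in $B_2$ and $B_1$ (positive-definite combinations of positive-definite matrices, using $\pi\in(0,1)$ and $\Sigma_{\theta,1}-\Sigma_\theta\succ 0$), and $(B_2-B_1)+(B_2^*-B_1)$ in the formula for $A_1$. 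Once invertibility is secured, the computation is the direct matrix analogue of the single-asset case, and the affine conjecture is self-consistent because the matched equations have exactly the solutions displayed.
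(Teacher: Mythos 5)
Your proposal follows the same route as the paper's proof: backward induction from $t=2$, substituting the affine conjecture into the market-clearing condition for each period, and matching the coefficients on $\Theta_t$, $s-\bar D$, and the constant term (with the informed agents' perceived coefficients $A_2^*,B_2^*$ entering at $t=1$), exactly as the paper does via the method of undetermined coefficients in vector form. Your additional attention to the placement of the precision-matrix inverses and to nonsingularity of the inverted matrices is a sensible refinement that the paper leaves implicit, but it does not change the argument.
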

The result is simply the one from the single asset case in vector notation. Hence, the economic meaning behind $A_t$, $B_t$, and $C_t$ are the same as in the previous case for $t=1,2$.

\subsection{Cross-sectional Momentum and Reversal}
In the single asset model, the momentum and reversal in price was in time-series. With the multiple asset extension, the autocorrelation of price changes can be illustrated cross-sectionally. This is more relevant for empirical connections as most of the empirical studies focus on relative momentum and reversal in the cross-section.
\begin{proposition}(Cross-sectional Momentum and Reversal)\\
The momentum measure is the matrix $\Gamma_m=Cov(S_2-S_1,S_1-S_0)$.
\begin{align*}
\Gamma_m=Cov(S_2-S_1,S_1-S_0)=(B_2-B_1)\Sigma_sB_1'-B_1C_1\Sigma_\theta C_1'B_1'
\end{align*}
If the matrix is positive semi-definite, a cross-sectional momentum results.\\
The reversal measure is the matrix $\Gamma_r=Cov(D-S_2,S_2-S_1)$.
\begin{align*}
\Gamma_r=Cov(D-S_2,S_2-S_1)=-(B_2-B_1)\Sigma_sB_2'-B_2C_2\Sigma_\theta(B_2C_2)'
\end{align*}
which is always negative semi-definite, resulting in a cross-sectional price reversal.
\end{proposition}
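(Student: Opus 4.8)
The plan is to mirror the derivations of the single-asset momentum and reversal propositions, now carrying the matrix transposes, and then to supply the one ingredient with no scalar analogue, namely the definiteness statements. I would begin from the equilibrium prices of the preceding multiple-asset proposition and write the two relevant price changes together with the terminal gap as
\begin{align*}
S_2-S_1&=(A_2-A_1)+(B_2-B_1)(s-\bar{D})+B_2C_2\Theta_2-B_1C_1\Theta_1,\\
S_1-S_0&=(A_1-S_0)+B_1(s-\bar{D})+B_1C_1\Theta_1,\\
D-S_2&=(D-A_2)-B_2(s-\bar{D})-B_2C_2\Theta_2.
\end{align*}
Exactly as in the scalar case, $A_1$, $A_2$ and $S_0$ are deterministic and drop out of the covariances, while $\Theta_1$ and $\Theta_2$ are mean-zero, mutually independent, and independent of the signal innovation, so the surviving stochastic content is governed only by $\Sigma_s$ and $\Sigma_\theta$.

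I would then apply bilinearity of covariance together with the rule $Cov(MX,NY)=M\,Cov(X,Y)\,N'$. In $\Gamma_m$ only the signal--signal pairing and the $\Theta_1$--$\Theta_1$ pairing survive; in $\Gamma_r$ only the signal--signal and $\Theta_2$--$\Theta_2$ pairings survive. This yields $\Gamma_m=(B_2-B_1)\Sigma_sB_1'-B_1C_1\Sigma_\theta C_1'B_1'$ and $\Gamma_r=-(B_2-B_1)\Sigma_sB_2'-B_2C_2\Sigma_\theta(B_2C_2)'$. A consistency check is to set $N=1$: every matrix becomes a scalar, every transpose is vacuous, and one recovers the scalar formulas. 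The only bookkeeping subtlety relative to the scalar computation is keeping each transpose on the correct side, which the product rule above settles.

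For the definiteness claims I would argue as follows. In $\Gamma_r$ the supply term is $-(B_2C_2)\Sigma_\theta(B_2C_2)'$, i.e. minus a congruence transform of the covariance matrix $\Sigma_\theta\succeq0$, hence negative semi-definite with no further work. For the signal term $-(B_2-B_1)\Sigma_sB_2'$ I would invoke the structural facts behind the scalar inequalities: $C_2=\alpha\pi^{-1}\Sigma_s$ is symmetric, $B_2$ is positive definite, and $B_2-B_1$ is positive semi-definite because $B_1$ has the matrix weighted-average form in $B_2$ and $B_2^*$ while $B_2-B_2^*\succeq0$, the latter forced by the standing assumption that $\Sigma_{\theta,1}-\Sigma_\theta$ is positive definite. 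The cross-sectional momentum claim would then be stated as the matrix analogue of the scalar momentum condition: $\Gamma_m\succeq0$ precisely when the heterogeneity term $(B_2-B_1)\Sigma_sB_1'$ dominates the supply term $B_1C_1\Sigma_\theta C_1'B_1'$ in the positive-semi-definite order.

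The hard part will be the signal term of $\Gamma_r$, and symmetrically the clean formulation of the $\Gamma_m$ condition. In the scalar case ``negative'' is immediate from counting signs, but $-(B_2-B_1)\Sigma_sB_2'$ is a product of distinct matrices that need neither commute nor be symmetric, so concluding $x'\Gamma_r x\le0$ for all $x$ is not a one-line congruence argument: a product of two symmetric positive-semi-definite matrices need not itself be positive semi-definite, and its symmetric part can fail to be as well. I expect to handle this by exploiting the explicit equilibrium forms --- symmetrizing, using that $C_2$ is a multiple of $\Sigma_s$, and using the shared weighted-average structure of $B_1$, $B_2$ and $B_2^*$ --- possibly at the cost of an additional regularity assumption, for instance that $\Sigma_D$, $\Sigma_s$ and $\Sigma_\theta$ commute, in which case the coefficient matrices are simultaneously diagonalizable and the proposition reduces eigenvalue by eigenvalue to the single-asset case.
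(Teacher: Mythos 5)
Your computation of $\Gamma_m$ and $\Gamma_r$ is correct and is essentially the route the paper takes: the paper also reduces $Cov(S_2-S_1,S_1-S_0)$ to $Cov(S_2-S_1,S_1)$ and $Cov(D-S_2,S_2-S_1)$ to $-Cov(S_2,S_2-S_1)$, drops the deterministic terms, and reads off the $\Sigma_s$ and $\Sigma_\theta$ contributions "using the same steps as in the single asset case"; your explicit use of $Cov(MX,NY)=M\,Cov(X,Y)\,N'$ just makes that bookkeeping visible, and your $N=1$ check is a nice sanity test. Your treatment of the momentum condition (PSD dominance of the heterogeneity term over the supply term, with $B_2-B_1\succeq 0$ inherited from the weighted-average structure and $B_2-B_2^*\succeq 0$) also matches the paper, which likewise asserts $B_2-B_1$ positive definite by "extending the same logic" from the scalar case and leaves the momentum condition imprecise.

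Where you genuinely diverge is the negative semi-definiteness of $\Gamma_r$, and here your caution is well placed: the paper's own argument is exactly the sign-counting you warn against --- it declares $\Gamma_r$ negative semi-definite "as $B_1$, $B_2$, $C_2$, $\Sigma_s$, and $\Sigma_\theta$ are all positive semi-definite with the negative sign before each term," which is not a valid inference, since a product such as $(B_2-B_1)\Sigma_s B_2'$ of non-commuting PSD matrices need not be PSD or even symmetric. Your congruence argument for the supply term $-(B_2C_2)\Sigma_\theta(B_2C_2)'$ is correct and is already firmer than what the paper writes. But the signal term is left in your proposal as a plan rather than a proof: symmetrization, or an added assumption that $\Sigma_D$, $\Sigma_s$, $\Sigma_\theta$ commute so that everything diagonalizes simultaneously and the claim reduces to the scalar case eigenvalue by eigenvalue, would indeed close it, yet that assumption is not in the paper and you have not carried the argument through. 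So relative to a complete proof there is a remaining gap at precisely the step you flag --- one the paper papers over silently --- and if you adopt the commutativity route you should state it as an explicit additional hypothesis rather than as part of the proposition as written.
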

Again, the momentum and reversal results are the vector form counterparts of the previous single asset case. Similar reasoning gives the comparative statics of increased momentum with larger $\pi$ and $\Delta\Sigma_\theta=\Sigma_{\theta,1}-\Sigma_{\theta,2}$, which require the positive semi-definite matrices $\partial\Gamma_m/\partial\pi$ and $\partial\Gamma_m/\partial\Delta\Sigma_\theta$, respectively. Likewise as before, the price reversal is larger with noisier supplies and signals in the positive semi-definite matrices of $\partial\Gamma_r/\partial\Sigma_\theta$ and $\partial\Gamma_r/\partial\Sigma_s$.\\
\indent Two known empirical patterns about momentum and reversal are also apparent from the above derivation. The trading volume effect predicts higher autocorrelation for those assets with more trading volume. This pattern is captured by the two partial derivatives $\partial\Gamma_m/\partial X_1>0$ and $\partial\Gamma_r/\partial X_2<0$ where $X_t$ is the vector of asset demands $x_{t,m}$ at time $t$ for asset $m$. A high demand could either be due to a strong signal in high $|s-\bar{D}|$ or a low supply $1+\Theta_t$. In either case, the momentum and the reversal measures are amplified as heterogeneity and stationarity are intensified.\\
\indent Another empirical connection is the price volatility effect, which predicts that the price momentum and reversal will be stronger for assets with higher price volatility. The two partial derivatives $\partial\Gamma_m/\partial\Sigma_{S_1}>0$ and $\partial\Gamma_r/\partial\Sigma_{S_2}<0$ demonstrate this effect where $\Sigma_{S_t}$ is the covariance matrix of the price vector $S_t$ at time $t$. Similar to above, the high volatility of the asset price could be because of a stronger signal or a volatile supply. For both channels, the momentum and reversal patterns are intensified as the heterogeneous beliefs friction and the stationarity increase. The empirical patterns related to the above and the following section are demonstrated by Verardo(2009).

\subsection{Other Price Predictabilities}
Two other price predictability patterns can be easily explained by the multi-asset extension. First, the co-movement in price arises because of correlated payoffs. Suppose now that there are two risky assets with final payoffs of $D_m$ and $D_n$. The two payoffs are correlated with $Cov(D_m,D_n)=\sigma_{m,n}$. Define a co-movement measure $\Gamma_{ct}$ as the covariance matrix of the price vector $(S_t-S_{t-1})$. Hence, if $\Sigma_D$, the covariance matrix of $D$, is positive definite, the co-movement measures $\Gamma_{ct}$ are positive definite for $t=1,2,3$.\\
\indent Intuitively, there are two channels connecting the information between the two risky assets. First, a signal $s_m$ about a risky asset $m$ also acts as a signal for the other risky asset $n$. Second, the price of a risky asset $S_{t,m}$ contains information about the payoff of the other risky asset $n$. Hence, through these two channels, the risky asset prices either move in the same direction if $\sigma_{m,n}>0$ or in the opposite direction if the covariance is negative. The results is a co-movement in prices in each period with the assets that co-move showing price momentum and reversal in the same directions together.\\
\indent The multi-asset extension can also explain the lead-lag effect as well. The effect demonstrates how the price of a risky asset even without any signal is affected by another asset's price. Suppose $\Sigma_D$, the covariance matrix of $D$, is positive definite. If the signals for some but not all the assets are muted such that $s_n=\bar{D}$ for $n=1,\dots,i$ with $i<N$, the co-movement in prices still results. The idea is a simple extension of the previous result on co-movement in prices. Hence, those assets without any good or bad signals may still experience price changes as their valuation follows the leading asset prices.\\
\indent The lead-lag effect may be even stronger than the above proposition. Suppose a signal $s_m$ for a risky asset $m$ is much more precise, that is, more informative than the other risky asset $n$.
\begin{align*}
\frac{\sigma^2_{s^i}}{\sigma^2_{D_i}+\sigma^2_{s^i}}<\frac{\sigma^2_{s^j}}{\sigma^2_{D_j}+\sigma^2_{s^j}}
\end{align*}
Then, depending on the parameters especially the final payoff covariance matrix $\Sigma_D$, it may be the case that even if the asset $n$ had a mildly bad signal, its price may increase if the signal $s_m$ for the asset $m$ is strong and good enough.

\section{Conclusion}
In a simple discrete time framework, the two frictions of asymmetric information and heterogeneous beliefs result in an explanation of price momentum and reversal. Some investors receive a noisy signal about an asset's fundamental value, while others can only infer this signal from the asset price and trade accordingly as trend-chasers. Additionally, the heterogeneous beliefs assumption makes the informed agents under-react because they believe that the uninformed will under-react. Momentum results from the heterogeneous beliefs setup leading to under-reaction of the informed agents, leading to a gradually change in price. Reversal occurs because the price is pushed away from the fundamental value and is pulled back due to the stationarity of the noisy supply and the signal. The model demonstrates how the predictability anomalies of momentum and reversal could arise endogenously in a bounded rationality framework with two frictions. An extension of the model can also account for other stock market patterns such as co-movement and lead-lag effects.

\newpage

\newpage
\appendix
\section{Single Asset Model}

\begin{lemma} (Verification Theorem)
Under the finite discrete time model with no interim income or consumption, the solutions that maximize the next period returns coincide with those solving the terminal wealth maximization problem.
\end{lemma}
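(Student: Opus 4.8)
The plan is to establish that for CARA (exponential) utility with normal returns and no interim consumption, the myopic optimizer's demand coincides with the dynamic optimizer's demand, by backward induction on $t$. The key structural fact to exploit is that under exponential utility the value function at each date retains an exponential form in wealth, with a state-dependent coefficient that does not interact with the wealth level in a way that would alter the static first-order condition. I would set up the dynamic program $V_t(W_t, \mathcal{F}_{t,i}) = \max_{x_t} E_t\big(V_{t+1}(W_t + x_t(S_{t+1}-S_t), \mathcal{F}_{t+1,i})\big)$ with terminal condition $V_3(W_3) = -\exp\{-\alpha W_3\}$, and show inductively that $V_t(W_t, \mathcal{F}_{t,i}) = -g_t(\mathcal{F}_{t,i})\exp\{-\alpha W_t\}$ for some positive $\mathcal{F}_{t,i}$-measurable random variable $g_t$.

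\textbf{The key steps, in order:}
First I would verify the base case: at $t=2$, the continuation value is $V_3(W_3) = -\exp\{-\alpha(W_2 + x_2(S_3 - S_2))\}$, and since $S_3 = D$ is conditionally normal given $\mathcal{F}_{2,i}$, the conditional expectation is $-\exp\{-\alpha W_2\}\exp\{-\alpha x_2(E_2(S_3)-S_2) + \tfrac{1}{2}\alpha^2 x_2^2 Var_2(S_3)\}$ by the moment generating function of the normal. Maximizing over $x_2$ gives exactly the myopic solution $x_2 = (E_2(S_3)-S_2)/(\alpha Var_2(S_3))$, and substituting back yields $V_2 = -g_2 \exp\{-\alpha W_2\}$ with $g_2 = \exp\{-\tfrac{(E_2(S_3)-S_2)^2}{2Var_2(S_3)}\}$, which is $\mathcal{F}_{2,i}$-measurable and positive. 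Then for the inductive step at a generic $t$, assuming $V_{t+1} = -g_{t+1}\exp\{-\alpha W_{t+1}\}$, I would write $E_t(V_{t+1}) = -E_t\big(g_{t+1}\exp\{-\alpha(W_t + x_t(S_{t+1}-S_t))\}\big) = -\exp\{-\alpha W_t\} E_t\big(g_{t+1}\exp\{-\alpha x_t(S_{t+1}-S_t)\}\big)$, and argue that the $\arg\max$ over $x_t$ of this expression is the same as the $\arg\max$ of $E_t\big(\exp\{-\alpha x_t(S_{t+1}-S_t)\}\big)$ — i.e. the myopic problem — so long as $g_{t+1}$ is appropriately independent of, or log-linear in, the relevant shocks. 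Finally I would confirm that the resulting $V_t$ again has the form $-g_t\exp\{-\alpha W_t\}$ with $g_t$ measurable with respect to $\mathcal{F}_{t,i}$, closing the induction, and conclude that the sequence of myopic demands solves the terminal-wealth problem.

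\textbf{The main obstacle} I anticipate is the inductive step's claim that the future value-function coefficient $g_{t+1}$ does not distort the current optimal demand. This is immediate if $g_{t+1}$ is conditionally independent of $S_{t+1}$ given $\mathcal{F}_{t,i}$, but in general $g_{t+1}$ depends on future prices and signals, which are correlated with $S_{t+1}$. The clean resolution — and the one I would pursue — is that in this particular model the informed agent's information is static (the signal $s$ arrives once, before $t=1$, and no new signal arrives), so conditional on $\mathcal{F}_{t,i}$ for $t \geq 1$ the only randomness in $g_{t+1}$ comes from the supply shocks $\theta_{t+1}, \theta_{t+2}$ and future prices that are affine in those shocks; I would then show the Gaussian integral $E_t(g_{t+1}\exp\{-\alpha x_t(S_{t+1}-S_t)\})$ factors (or completes the square) so that the $x_t$-dependent piece separates multiplicatively from the $x_t$-independent piece, leaving the first-order condition unchanged. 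For $t = 0$, where the signal has not yet arrived, I would handle the extra expectation over $s$ analogously, using that $S_1$ is affine in $(s, \theta_1)$ jointly Gaussian. The bookkeeping of which covariances enter is the tedious part, but the exponential-affine-Gaussian structure guarantees it goes through; I would relegate the explicit Gaussian algebra to a remark and state only that the separation holds.
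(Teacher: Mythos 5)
Your route---backward induction keeping the CARA value function in the form $V_t=-g_t\exp\{-\alpha W_t\}$---is essentially the same as the paper's (the paper also argues by induction on the value function, treating the continuation profit $x_{t+1}^*(S_{t+2}^*-S_{t+1}^*)$ as a constant when taking the time-$t$ first-order condition). The trouble is that the one step you defer---``the Gaussian algebra separates the $x_t$-dependent piece from $g_{t+1}$, so the first-order condition is unchanged''---is the entire content of the lemma, and as a structural claim it is not true. The exponential-affine-Gaussian setup guarantees tractability (the value function stays exponential-quadratic in the state variables), not myopic optimality; the separation you need would follow if $g_{t+1}$ were conditionally independent of $S_{t+1}$ given time-$t$ information, and in this model it is not.

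Concretely, take the informed agent at $t=1$. Your own base case gives $g_2=\exp\{-(E_1(D)-S_2)^2/(2Var_1(D))\}$ (paper's notation: subscripts denote type, conditioning on current information). Conditional on the informed agent's time-1 information, both $E_1(D)-S_2$ and $S_2-S_1$ are affine in the single shock $\theta_2$ with coefficients $\mp b_2c_2$, so $g_2$ is a nondegenerate function of exactly the return being traded. Completing the square in $E_1\bigl[g_2\exp\{-\alpha x_1(S_2-S_1)\}\bigr]$ with $\mu=E_1(S_2)-S_1$, $\nu=E_1(D)-E_1(S_2)$, $V=Var_1(D)$, $\sigma^2=Var_1(S_2)$, and $k=V\sigma^2/(V+\sigma^2)$ gives
\begin{equation*}
x_1^{\mathrm{dyn}}=\frac{\mu}{\alpha k}+\frac{\nu}{\alpha V}\neq\frac{\mu}{\alpha\sigma^2}=x_1^{\mathrm{myopic}},
\end{equation*}
i.e.\ the effective variance shrinks ($k<\sigma^2$) and a hedging term in next period's expected excess return appears: the $x_1$-dependent piece does not factor out of the $g_2$ integral. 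So you cannot ``relegate the algebra to a remark''---carried out, the algebra contradicts the separation you assert, and you would need a genuinely model-specific argument (or a qualification of the statement) for why these extra terms do not matter for the equilibrium being constructed. For comparison, the paper's own proof sidesteps the same point by declaring the continuation term constant because its ingredients are ``fixed from the previous steps''; it asserts, rather than derives, the separation, so you have correctly located the crux but not closed it.
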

\begin{proof} (Lemma)\\
Let $V_t$ denote the value function for time $t$. In the final period $T$, the value function $V_T$ coincides for both maximization problems. Hence, the solution $x_T^*$ and $S_T^*$ are the same. Suppose the value function solutions for both problems coincided in $t+1$. Then, in $t$, the value function for the terminal wealth maximization is $V_t=E_t(-exp\{-\alpha(W_t+x_t(S_{t+1}^*-S_t)+x_{t+1}^*(S_{t+2}^*-S_{t+1}^*)\})$. Since the equilibrium variables with asterisk are fixed from the previous steps, the expression $W_t+x_{t+1}^*(S_{t+2}^*-S_{t+1})$ in the parenthesis is constant. Hence, the value function first order condition coincides with that from the next period return maximizing value function $V_t=E_t(-exp\{-\alpha(W_t+x_t(S_{t+2}^*-S_t)\})$. By induction, the value function solutions are the same for both problem in all periods.
\end{proof}
\\
\\
\begin{proof} (Proposition 1)\\
For this proof, notice that the subscripts show agent types, and expectations and variances are always conditional on the current period information. The subscripts for variance denote the agent types for taking expectations; for example, $Var_1(D)$ is the variance of $D$ for type 1 agents. Substitution results in for $t=2$
\begin{align*}
\pi\frac{E_1(D)-S_2}{\alpha Var_1(D)}+(1-\pi)\frac{E_2(D)-S_2}{\alpha Var_2(D)}=1+\theta_2
\end{align*}
Again, substituting in the expectations and the variances,
\begin{align*}
&\pi\frac{\bar{D}+\beta_s(s-\bar{D})-S_2}{\alpha Var_1(D)}+(1-\pi)\frac{\bar{D}+\beta_\xi(s-\bar{D}-c_2\theta_2)-S_2}{\alpha Var_2(D)}=1+\theta_2\\
&\pi\frac{\bar{D}+\frac{\beta_s}{b_2}(S_2-a_2)+\beta_sc_2\theta_2-(S_2-a_2)-a_2}{\alpha Var_1(D)}\\
&\qquad\qquad+(1-\pi)\frac{\bar{D}+\frac{\beta_\xi}{b_2}(S_2-a_2)-(S_2-a_2)-a_2}{\alpha Var_2(D)}=1+\theta_2
\end{align*}
The above equation is affine in $((S_2-a_2),\theta_2)$. Collecting the terms, the coefficient for the $(S_2-a)$ term is
\begin{align*}
&\pi\frac{\frac{\beta_s}{b_2}-1}{\alpha Var_1(D)}+(1-\pi)\frac{\frac{\beta_\xi}{b_2}-1}{\alpha Var_2(D)}=0\\
&b_2=\frac{\pi\beta_\xi(\sigma^2_s+c_2^2\sigma^2_\theta)\beta_s+(1-\pi)\beta_s\sigma^2_s\beta_\xi}{\pi\beta_\xi(\sigma^2_s+c_2^2\sigma^2_\theta)+(1-\pi)\beta_s\sigma^2_s}
\end{align*}
For the $\theta_2$ term, the coefficient is
\begin{align*}
&\pi\frac{\beta_sc_2}{\alpha Var_1(D)}=1\\
&c_2=\frac{\alpha\sigma^2_s}{\pi}
\end{align*}
Last, the coefficient for the constant term is
\begin{align*}
&\pi\frac{\bar{D}-a_2}{\alpha Var_1(D)}+(1-\pi)\frac{\bar{D}-a_2}{\alpha Var_2(D)}=1\\
&\pi(1+\frac{\bar{D}-a_2-\alpha Var_1(D)}{\alpha Var_1(D)})+(1-\pi)\frac{\bar{D}-a_2}{\alpha Var_2(D)}=1\\
&\pi-(1-\pi)\frac{\frac{\beta_\xi}{b_2}-1}{\frac{\beta_s}{b_2}-1}\frac{\bar{D}-a_2-\alpha Var_1(D)}{\alpha Var_2(D)}+(1-\pi)\frac{\bar{D}-a_2}{\alpha Var_2(D)}=1\\
&\bar{D}-a_2=\alpha\frac{(\beta_s-b_2)Var_2(D)+(\beta_\xi-b_2)Var_1(D)}{\beta_s-\beta_\xi}\\
&a_2=\bar{D}-\alpha\frac{(\beta_s-b_2)\beta_\xi(\sigma^2_s+c_2^2\sigma^2_\theta)+(\beta_\xi-b_2)\beta_s\sigma^2_s}{\beta_s-\beta_\xi}
\end{align*}

Similarly, for the $t=1$ price,
\begin{align*}
&\pi\frac{E_1(S_2)-S_1}{\alpha Var_1(S_2)}+(1-\pi)\frac{E_2(S_2)-S_1}{\alpha Var_2(S_2)}=1+\theta_1
\end{align*}
Again, substituting in the necessary variables,
\begin{align*}
&\pi\frac{a_2^*+\frac{b_2^*}{b_1}(S_1-a_1)+b_2^*c_1\theta_1-(S_1-a_1)-a_1}{\alpha Var_1(S_2)}\\
&\qquad\qquad +(1-\pi)\frac{a_2+\frac{b_2}{b_1}(S_1-a_1)-(S_1-a_1)-a_1}{\alpha Var_2(S_2)}=1+\theta_1
\end{align*}
The expression for the variance term is
\begin{align*}
Var_1(S_2)=&(b_2^*)^2c_2^2\sigma_{\theta,1}^2\\
Var_2(S_2)=&b_2^2c_2^2\sigma^2_\theta
\end{align*}
The equation is again affine in $((S_1-a_1),\theta_1)$, so collecting the terms for $(S_1-a_1)$,
\begin{align*}
&\pi\frac{b_2^*-b_1}{Var_1(S_2)}+(1-\pi)\frac{b_2-b_1}{Var_2(S_2)}=0\\
&b_1=\frac{\pi b_2^2\sigma^2_\theta b_2^*+(1-\pi)(b_2^*)^2\sigma_{\theta,1}^2 b_2}{\pi b_2^2\sigma^2_\theta+(1-\pi)(b_2^*)^2\sigma_{\theta,1}^2}
\end{align*}
For the $\theta_1$ term, the coefficient is
\begin{align*}
&\pi\frac{b_2^*c_1}{\alpha Var_1(S_2)}=1\\
&c_1=\frac{\alpha b_2^*c_2^2\sigma_{\theta,1}^2}{\pi}
\end{align*}
Finally, the constant term coefficient results in
\begin{align*}
&\pi\frac{a_2^*-a_1}{\alpha Var_1(S_2)}+(1-\pi)\frac{a_2-a_1}{\alpha Var_2(S_2)}=1\\
&a_1=\frac{(b_2-b_1)(a_2^*-\alpha (b_2^*)^2c_2^2\sigma_{\theta,1}^2)+(b_2^*-b_1)(a_2-\alpha b_2^2c_2^2\sigma^2_\theta)}{(b_2-b_1)+(b_2^*-b_1)}
\end{align*}
Last, for $t=0$ combining the market clearing condition and the demand equation,
\begin{align*}
&\pi\frac{a_1^*-S_0}{\alpha Var_1(S_1)}+(1-\pi)\frac{a_1-S_0}{\alpha Var_2(S_1)}=1\\
&S_0=\frac{\pi Var_2(S_1)a_1^*+(1-\pi)Var_1(S_1)a_1-\alpha Var_1(S_1)Var_2(S_1)}{\pi Var_2(S_1)+(1-\pi)Var_1(S_1)}
\end{align*}
\end{proof}
\\
\\
\begin{proof} (Proposition 2)
The covariance term is
\begin{align*}
Cov(S_2-S_1,S_1-S_0)=Cov(S_2-S_1,S_1)=(b_2-b_1)b_1\sigma^2_s-b_1^2c_1^2\sigma^2_\theta
\end{align*}
\begin{align*}
&(b_2-b_1)b_1\sigma^2_s-b_1^2c_1^2\sigma^2_\theta>0\\
&\pi b_2^2\sigma^2_s(b_2-b_2^*)-b_1c_1^2>0\\
&b_2-b_2^*=\frac{\pi\beta_\xi(\sigma^2_s+c_2^2\sigma^2_\theta)\beta_s+(1-\pi)\beta_s\sigma^2_s\beta_\xi}{\pi\beta_\xi(\sigma^2_s+c_2^2\sigma^2_\theta)+(1-\pi)\beta_s\sigma^2_s}\\
&\qquad-\frac{\pi\beta_\xi^*(\sigma^2_s+c_2^2\sigma_{\theta,1}^2)\beta_s+(1-\pi)\beta_s\sigma^2_s\beta_\xi^*}{\pi\beta_\xi^*(\sigma^2_s+c_2^2\sigma_{\theta,1}^2)+(1-\pi)\beta_s\sigma^2_s}\\
&=\pi(1-\pi)\sigma^2_s\beta_s^2(\beta_\xi(\sigma^2_s+c_2^2\sigma^2_\theta)-\beta_\xi^*(\sigma^2_s+c_2^2\sigma_{\theta,1}^2))\\
&\qquad+\pi(1-\pi)\beta_\xi\beta_\xi^*\sigma^2_s\beta_sc_2^2(\sigma_{\theta,1}^2-\sigma^2_\theta)+(1-\pi)^2\beta_s\sigma^4_s(\beta_\xi-\beta_\xi^*)
\end{align*}
$\frac{\partial b_2}{\partial \sigma^2_\theta}<0$. Hence, $b_2^*<b_2$ if $\sigma_{\theta,1}>\sigma^2_\theta$ as is assumed. Further derivation of the condition is not possible as there are many parameters that dictate the condition.
\end{proof}
\\
\\
\begin{proof} (Proposition 3)
The covariance measure for reversal is
\begin{align*}
Cov(D-S_2,S_2-S_1)=-Cov(S_2,S_2-S_1)=-b_2(b_2-b_1)\sigma^2_s-b_2^2c_2^2\sigma^2_\theta
\end{align*}
The measure is always negative as $b_2-b_1>0$ as shown in the previous theorem proof.
\end{proof}

\section{Multiple Asset Model}
\begin{proof} (Proposition 4)\\
The proof is a simple extension of the previous ones to vector notation. Substituting the expectation and variance into the market-clearing condition in $t=2$ gives
\begin{align*}
&\pi(\alpha Var_1(D))^{-1}(E_1(D)-S_2)+(1-\pi)(\alpha Var_2(D))^{-1}(E_2(D)-S_2)=\textbf{1}+\Theta_2\\
&\pi(\alpha Var_1(D))^{-1}(\bar{D}+\beta_s(s-\bar{D})-S_2)\\
&\qquad+(1-\pi)(\alpha Var_2(D))^{-1}(\bar{D}+\beta_\xi(s-\bar{D}-C_2\Theta_2)-S_2)=\textbf{1}+\Theta_2
\end{align*}
Some algebra gives the second period price as
\begin{align*}
S_2=&A_2+B_2(s-\bar{D}+C_2\Theta_2)
\end{align*}
where
\begin{align*}
A_2=&\bar{D}-\alpha(\beta_s-\beta_\xi)^{-1}((\beta_s-B_2)(\Sigma_s+C_2\Sigma_\theta C_2')\beta_\xi'+(\beta_\xi-B_2)\Sigma_s\beta_s')\\
B_2=&(\pi\beta_\xi(\Sigma_s+C_2\Sigma_\theta C_2')+(1-\pi)\beta_s\Sigma_s)^{-1}\\
&\times(\pi\beta_\xi(\Sigma_s+C_2\Sigma_\theta C_2')\beta_s+(1-\pi)\beta_s\Sigma_s\beta_\xi)\\
C_2=&\alpha\pi^{-1}\Sigma_s
\end{align*}
found using the same method as in the single asset case.\\
Similarly, period 1 price comes from the market-clearing condition
\begin{align*}
&\pi(\alpha Var_1(S_2))^{-1}(E_1(S_2)-S_1)\\
&\qquad+(1-\pi)(\alpha Var_2(S_2))^{-1}(E_2(S_2)-S_1)=\textbf{1}+\Theta_1\\
&\pi(\alpha Var_1(S_2))^{-1}(A_2^*+B_1^{-1}B_2^*(S_1-A_1)+B_2^*C_1\Theta_1-(S_1-A_1)-A_1)\\
&\qquad+(1-\pi)(\alpha Var_2(S_2))^{-1}(A_2+B_1^{-1}B_2(S_1-A_1)-(S_1-A_1)-A_1)\\
&\qquad=\textbf{1}+\Theta_1
\end{align*}
Again, some algebraic manipulation gives the price in the first period as
\begin{align*}
S_1=&A_1+B_1(s-\bar{D}+C_1\Theta_1)
\end{align*}
where
\begin{align*}
A_1=&((B_2-B_1)(A_2^*-\alpha B_2^*C_2\Sigma_{\theta,1}(B_2^*C_2)')\\
&+(B_2^*-B_1)(A_2-\alpha B_2C_2\Sigma_\theta(B_2C_2)')((B_2-B_1)+(B_2^*-B_1))^{-1}\\
B_1=&(\pi B_2\Sigma_\theta B_2'+(1-\pi)B_2^*\Sigma_{\theta,1}(B_2^*)')^{-1}\\
&\times(\pi B_2^*B_2\Sigma_\theta B_2'+(1-\pi)B_2B_2^*\Sigma_{\theta,1}(B_2^*)')\\
C_1=&\alpha\pi^{-1}B_2^*C_2C_2^*\Sigma_{\theta,1}
\end{align*}
using the method of undetermined coefficients as before.
Last, the ex-ante price follows from the market-clearing condition
\begin{align*}
\pi(\alpha Var_1(S_1))^{-1}(E_1(S_1)-S_0)+(1-\pi)(\alpha Var_2(S_1))^{-1}(E_2(S_1)-S_0)=\textbf{1}
\end{align*}
The solution gives the price as
\begin{align*}
S_0=&(\pi Var_2(S_1)+(1-\pi)Var_1(S_1))^{-1}(\pi Var_2(S_1)A_1^*\\
&+(1-\pi)Var_1(S_1)A_1-\alpha Var_1(S_1)Var_2(S_1))
\end{align*}
\end{proof}
\\
\\
\begin{proof} (Proposition 5)\\
Using the same steps as in the single asset case, the price autocorrelation $\Gamma_m$ is calculated as follows.
\begin{align*}
Cov(S_2-S_1,S_1-S_0)=Cov(S_2-S_1,S_1)=(B_2-B_1)\Sigma_sB_1'-B_1C_1\Sigma_\theta C_1'B_1'
\end{align*}
The matrix $(B_2-B_1)$ is positive definite, extending the same logic for $b_2-b_1>0$ in the single asset case. Then, the condition for the covariance matrix to be positive semi-definite coincides with that of the single asset case. Again, with many parameters to pin down, the condition for momentum is not precise, but the proposition shows how it is possible to have cross-sectional momentum in the multi-asset case.\\
The reversal measure $\Gamma_r$ is calculated as follows.
\begin{align*}
&Cov(D-S_2,S_2-S_1)=-Cov(S_2,S_2-S_1)\\
&=-(B_2-B_1)\Sigma_sB_2'-B_2C_2\Sigma_\theta(B_2C_2)'
\end{align*}
The measure is always negative semi-definite as $B_1$, $B_2$, $C_2$, $\Sigma_s$, and $\Sigma_\theta$ are all positive semi-definite with the negative sign before each term.
\end{proof}

\end{document}